\newcommand{\m}[1]{\ensuremath{\mathsf{#1}}}
\newcommand{\pid}[1]{\m{#1}}
\newcommand{\carr}[1]{\langle #1  \rangle}
\newcommand{\emptyN}{\nil} 
\newcommand{\code}[1]{\texttt{\upshape #1}}
\newcommand{\nil}{\boldsymbol 0}
\newcommand{\dlock}{\boldsymbol 1}
\newcommand{\com}[2]{#1\;\code{-\hspace{-0.3mm}>}\;#2}
\newcommand{\gencom}{\com{\pid p.e}{\pid q}}
\newcommand{\gencomf}{\com{\pid p.e}{\pid q.f}}
\newcommand{\genmulticometa}{(\til \eta)}
\newcommand{\sel}[3]{\com{#1}{#2 [#3]}}
\newcommand{\gensel}{\sel{\pid p}{\pid q}{l}}
\newcommand{\lto}[1]{\mathrel{\stackrel{{\;\;#1\;\;}}{\mbox{\rightarrowfill}}}}
\newcommand{\condlbl}[3]{\eqcom{#1\,}{\,#2}: #3}
\newcommand{\lmto}[1]{\mathrel{\stackrel{#1}{\mbox{\rightarrowfill}}\!\!\raisebox{1ex}{\scriptsize$\ast$}}}
\newcommand{\lleft}{\textsc{l}}
\newcommand{\lright}{\textsc{r}}
\newcommand{\cond}[3]{\m{if}\, #1 \, \m{then} \, #2 \, \m{else} \, #3}
\newcommand{\eqcom}[2]{#1 \! \stackrel{\code{<\!-}}{\code{=}}\! #2}
\newcommand{\gencond}{\cond{\eqcom{\pid p}{\pid q}}{C_1}{C_2}}
\newcommand{\rec}[3]{\m{def} \, #1  =  #2 \, \m{in} \, #3}
\newcommand{\genrec}{\rec{X}{C_2}{C_1}}
\newcommand{\pn}{\m{pn}}
\newcommand{\call}[1]{#1}
\newcommand{\gencall}{\call X}
\newcommand{\pcont}{\mathtt{\ast}}
\newcommand{\til}{\tilde}
\newcommand{\epp}[2]{[\![#1]\!]_{#2}}
\newcommand{\rname}[2]{\ensuremath{\left\lfloor\mbox{{#1}$|${#2}}\right\rceil}}
\newcommand{\bsend}[2]{{#1}!\carr{#2}}
\newcommand{\brecv}[1]{#1?}
\newcommand{\parp}{\, \boldsymbol{|} \, }
\newcommand{\bsel}[2]{{#1}\oplus#2}
\newcommand{\bbranch}[2]{{#1}\&\{{#2}\}}
\newcommand{\precongr}{\preceq}
\newcommand{\smallpar}[1]{\paragraph{#1.}}
\newcommand{\acspar}{\hspace{1.2mm} | \hspace{1.2mm} }
\newcommand{\eval}{\downarrow}
\newcommand{\extract}[1]{(\![{#1}]\!)}
\newcommand{\rwto}{\leadsto}
\newcommand{\bisim}{\sim}
\newcommand{\asend}[2]{{#1}!\carr{#2}}
\newcommand{\arecv}[1]{#1?}
\newcommand{\actor}[3]{#1 \triangleright_{#2} #3}
\newcommand{\asel}[2]{{#1}\oplus#2}
\newcommand{\abranch}[2]{{#1}\&{#2}}
\newcommand{\multicom}[1]{\left(\begin{array}c #1 \end{array}\right)}
\newcommand{\snd}{\m{snd}}
\newcommand{\rcv}{\m{rcv}}
\begin{document}
\pagestyle{plain}

\title{The Paths to Choreography Extraction\thanks{%
Montesi was supported by CRC (Choreographies for Reliable and efficient Communication software), grant 
DFF--4005-00304 from the Danish Council for Independent Research.
Cruz-Filipe and Larsen were supported in part by
the Danish Council for Independent Research, Natural Sciences,
grant DFF-1323-00247.
}}
\author{Lu\'\i s Cruz-Filipe \and
        Kim S. Larsen \and
        Fabrizio Montesi}
\institute{University of Southern Denmark {\tt\{lcf,kslarsen,fmontesi\}@imada.sdu.dk}}

\maketitle

\begin{abstract}
Choreographies are global descriptions of interactions among concurrent components, most notably 
used in the settings of verification and synthesis of correct-by-construction 
software.
They require a top-down approach: programmers first write choreographies, and then use them to verify or 
synthesize their programs.
However, most software does not come with choreographies yet, which prevents their application.
To attack this problem, previous work investigated choreography extraction, which automatically constructs a 
choreography that describes the behavior of a given set of programs or protocol specifications.

We propose a new extraction methodology that improves on the state of the art:
we can deal with programs that are equipped with state and internal
computation; time complexity is dramatically better;
and we capture programs that work by exploiting asynchronous communication.


\end{abstract}

\section{Introduction}
\label{sec:intro}
Choreographies are global descriptions of interactions among components.
They have been used as a basis for different models and tools that aim at tackling the complexity of modern 
software, where separate units -- such as processes, objects, and threads -- interact to reach a common 
goal~\cite{BPMN,CDL}.

Two lines of research
are of particular interest.
In \emph{choreography specifications}, choreographies specify
interaction protocols, e.g.,
multiparty session types~\cite{HYC16}.
In \emph{choreographic programming}~\cite{M13:phd}, choreographies are programs that 
define the behavior 
of concurrent algorithms~\cite{CM16} and/or distributed systems~\cite{CHY12,CM13,PGGLM15}.
The key feature of these works is EndPoint Projection (EPP), a procedure that translates choreographies to 
correct endpoint behaviors in lower-level models. For choreography specifications, EPP generates the local 
specifications of each participant; these specifications 
can then be used for verification, to check whether some programs implement their role in the protocol correctly and 
will thus interact without problems at runtime~\cite{HYC16}. In 
choreographic programming, instead, EPP generates correct-by-construction implementations in a model of 
executable code (program synthesis), typically given in terms of a process calculus~\cite{CM13}.

EPP implements a top-down development methodology:
developers first write choreographies and then use 
the output mechanically generated by EPP. However, there are scenarios 
where this methodology is not applicable; for example:
\begin{itemize}
\item Analysis or integration of legacy software:
either code developed previously, or new code 
written in a technology without support for choreographies.
\item  Updates: endpoint programs generated by EPP can later be updated locally (e.g., for
configuration or optimizations). Since the original choreography is not automatically 
updated, rerunning EPP loses these changes.
\end{itemize}

To attack these issues, previous work investigated a procedure to infer choreographies from arbitrary 
endpoint descriptions. We call this procedure \emph{choreography extraction}. To 
the best of our knowledge, the current reference for extracting choreography specifications
is~\cite{LTY15}, where graphical choreographies that represent protocol specifications are extracted from 
communicating automata~\cite{BZ83}. Instead, the state of the art for extraction in choreographic 
programming is~\cite{CMS14}, where extraction takes terminating processes typed using a fragment of linear 
logic as input. We advance both lines of work in several aspects, described below. 

\subsection{Contributions}

\smallpar{Extraction for synchronous systems}
We define an extraction procedure that applies directly to both choreography 
specifications and choreographic programming, by working with representative models.
%
We focus on the more difficult case of choreographic 
programming, and then show how our approach can be applied 
to other settings in \S~\ref{sec:extensions}.
First we define an extraction algorithm for processes with synchronous communications
(\S~\ref{sec:extraction}), which 
showcases the key elements of our construction: building a choreography corresponds to finding
paths in a graph that represents the abstract execution of the input processes. Our extraction also helps in debugging: 
if extraction detects
a potential deadlock, we pinpoint it with a special term ($\dlock$).
This is the first extraction procedure for choreographic programming that can deal with procedures and infinite 
behavior~\cite{CMS14}.

\smallpar{Asynchrony}
We extend our development to asynchronous communication 
(\S~\ref{sec:async}).
The key novelty is that we can extract a new class of behaviors where processes progress because 
of asynchronous communication. The simplest example of this class is a two-way exchange: a network of two processes 
where each process 
starts by sending a value to the other, and then consumes the received value. This network is deadlocked under a 
synchronous semantics, violating the state-of-the-art requirements for extraction~\cite{LTY15}.
Capturing these behaviors is challenging for two reasons: there is no choreography language capable of 
representing them; and the extraction algorithms presented so far require the behaviors of processes to be 
representable also under a synchronous interpretation.
We overcome both limitations with a new choreography primitive for multiparty asynchronous exchange and 
a look-ahead mechanism for asynchronous actions in extraction.

\smallpar{Efficiency}
We show that our extraction has exponential worst-case time complexity in both the synchronous and the asynchronous 
cases (\S~\ref{sec:extraction} and \S~\ref{sec:async}, respectively),
unlike the factorial case of~\cite{LTY15},
even though we can capture a new class of behaviors.
In particular, we need only one phase of exponential complexity, while~\cite{LTY15}
uses multiple phases applied in sequence. The authors of~\cite{LTY15} detail only the complexities of their first two 
phases: the first 
has exponential complexity (but in a quantity larger than ours), while the second has factorial complexity in a 
function exponential in the size of the input.
Our better time complexity stems from the design of our 
process language, which does not 
allow non-deterministic receives from different channels,
and careful algorithm crafting.
Despite the restriction, we can still model interesting examples thanks to asynchronous exchange. In 
\S~\ref{sec:async}, we present a novel
formulation of the alternating 2-bit protocol, which is given in~\cite{DY12} and used in~\cite{LTY15}
as a motivating example. Our formulation is simpler and does not require threads
as in~\cite{LTY15}.

\section{Related Work}
\label{sec:related}

\smallpar{Choreographic Programming}
The state of the art for extraction in choreographic programming is~\cite{CMS14}, where synchronous processes with 
finite behavior are typed using the multiplicative-additive fragment of linear logic. Our approach is significantly more 
expressive, bringing support for recursion and asynchronous communication. Also, the proof theory
in~\cite{CMS14} requires that there are no cycles in the structure of connections among processes. 
We do not have this limitation.

\smallpar{Choreography Specifications}
To the best of our knowledge, the state of the art for extracting choreography specifications is~\cite{LTY15}, which 
captures more behaviors than previous works with similar objectives~\cite{LT12,MYH09}.

Extraction in~\cite{LTY15} is more restrictive wrt.\ to asynchrony, requiring all process traces and choices to be 
represented in the synchronous transition system of the network. Thus,
networks that are safe because of asynchronous communication are not extracted in~\cite{LTY15}. Instead, our 
extraction can deal with programs that use multiparty asynchronous exchange, where multiple processes 
exchange values by exploiting asynchronous communication.
As a consequence, we can extract the alternating 2-bit protocol implemented via asynchronous exchange in 
\S~\ref{sec:async}, which is deadlocked under a synchronous semantics and thus cannot be extracted in~\cite{LTY15}.
Our extraction is the first capturing systems that are not correctly approximated by 
synchronous semantics (cf.~\cite{BB11}). A precise characterization of the class of extractable systems is thus an 
interesting future direction.

To circumvent the limitation that asynchronous exchange is not supported, choreographies in~\cite{LTY15} support 
local concurrency: processes can have internal threads. This opens up for an alternative 
formulation of the alternating 2-bit protocol, where the two participants use two threads each.
However, these choreographies are harder to read. As an example, compare our 
choreography for the alternating 2-bit protocol in \S~\ref{sec:async} to that obtained with the automata in~\cite{LTY15} 
(given in~\cite{DY12}, Protocol~7 in Example~2.1). Our formulation is a simple recursive procedure with two exchanges, 
whereas the control flow in~\cite{DY12} is rather intricate and uses three different operators (fork, join, and merge) 
at different places to compose two separate loops. In our opinion, our choreographies follow the principles of 
structured programming to a greater extent, and are simpler; also because coordination happens only through 
communication.

More interestingly than readability, local concurrency makes the complexity of 
extraction blow up factorially~\cite{LTY15}: process threads are represented using 
non-determinism between different actions in communicating automata. Determining whether the 
non-deterministic behavior of these automata is extractable takes (super-)factorial time
(factorial time in the size of a graph similar to our AES, cf.\ Definition~\ref{defn:extr-graph})!
Thus, asynchronous exchange supports a more efficient way of capturing an interesting class of behaviors.
Nevertheless, we believe that developing efficient extractions of local concurrency may be useful future 
work.

\section{Core Choreographies and Stateful Processes}
\label{sec:cc}
We review the languages of Core Choreographies (CC) and Stateful Processes (SP), 
from~\cite{CM16b}, which respectively model choreographies and endpoint programs.
We introduce labels in the reduction semantics for these calculi
to formalize the link between choreographies and their process implementations as a
bisimilarity.

\smallpar{Core Choreographies (CC)}
The syntax of CC is given in Figure~\ref{fig:cc_syntax}. A choreography $C$ describes the behavior of a set of 
processes ($\pid p$, $\pid q$, \ldots)\ running concurrently. Each process has an internal memory cell storing
a local value (the value of the process).
%
\begin{figure}[t]
\begin{align*}
    C & ::= \nil \acspar \eta; C \acspar \gencond \acspar \genrec \acspar \gencall
    \\
    \eta & ::= \gencom \acspar \gensel
    \hspace{25mm}
    e ::= v \acspar \pcont \acspar \ldots
\end{align*}
\caption{Core Choreographies, Syntax.}
\label{fig:cc_syntax}
\end{figure}
Term $\nil$ is the terminated choreography (omitted in examples).
Term $\eta;C$ reads ``the system executes $\eta$ and proceeds as $C$''.
An interaction $\eta$ is either: a value communication $\gencom$, where
process $\pid p$ evaluates $e$ and sends the result to process $\pid q$, which stores it in its
memory cell, replacing its previous value; or a selection $\gensel$, where $\pid p$ selects $l$ among the
branches offered by $\pid q$.
We abstract from the concrete language of expressions $e$, which models internal computation and is orthogonal
to our development, assuming only that: expressions can contain values $v$ and the placeholder $\pcont$, which
refers to the value of the process evaluating them; and evaluating expressions always terminates
and returns a value. In a conditional $\gencond$, $\pid p$ checks if its value is equal to $\pid q$'s to
decide whether the system proceeds as $C_1$ or $C_2$. Term $\genrec$ defines a procedure $X$ with body
$C_2$, which can be called in $C_1$ and $C_2$ by using term $X$.

The semantics of CC is given in terms of labeled reductions $C,\sigma \lto\lambda C',\sigma'$; the main reduction rules
are given in Figure~\ref{fig:cc_semantics}.
Reductions are also closed under context (procedure definitions) and under a structural precongruence $\precongr$, 
allowing procedure calls to be unfolded and non-interfering actions to be executed in any order.
The most interesting rule for $\precongr$ is rule $\rname{C}{Eta-Eta}$, which swaps communications between
disjoint sets of processes (modeling concurrency).
The total function $\sigma$ maps each process name to the value it stores.
Labels $\lambda$ 
tell us which action has been performed, which helps stating our later results.
\begin{figure}[t]
{\footnotesize
\begin{eqnarray*}
&\infer[\rname{C}{Com}]
{
	\gencom;C,\sigma
	\lto{\com{\pid p.v}{\pid q}}
	C, \sigma[\pid q \mapsto v]
}
{
	e[\sigma(\pid p)/\pcont] \eval v
}
\qquad
\infer[\rname{C}{Sel}]
{
	\gensel;C, \sigma \lto\gensel C, \sigma
}
{}
\\[1ex]
&
\infer[\rname{C}{Then}]
{
	\gencond, \sigma  \lto{\condlbl{\pid p}{\pid q}{\m{then}}}    C_1, \sigma
}
{
	\sigma(\pid p) = \sigma(\pid q)
}
\qquad
\infer[\rname{C}{Eta-Eta}]{
	\eta;\eta' \precongr \eta';\eta
}{\pn(\eta) \cap \pn(\eta') = \emptyset}
\end{eqnarray*}
}
\caption{Core Choreographies, Semantics and Structural Precongruence (selected rules).}
\label{fig:cc_semantics}
\end{figure}
In rule $\rname{C}{Com}$, $v$ is the value obtained by evaluating ($\eval$) the expression $e$, with $\pcont$
replaced by the value of the sender $\pid p$, $\sigma(\pid p)$. In the reductum, $\sigma$ is updated such that the 
receiver $\pid q$ stores $v$.
Rule $\rname{C}{Sel}$ does not alter $\sigma$:
selections model invoking a method/operation available at the receiver.
Rules $\rname{C}{Then}$ and $\rname{C}{Else}$ (omitted) model conditionals in the 
standard way.
Function $\pn(C)$ returns all the process names that appear in $C$, and $C \equiv C'$ means $C \precongr C'$
and $C' \precongr C$.

\begin{example}\label{ex:auth_cc}
We define a simple choreography for client authentication. We write $\sel{\pid p}{\pid c,\pid s}{l}$ as a
shortcut for $\sel{\pid p}{\pid c}{l}; \sel{\pid p}{\pid s}{l}$.
\[ \rec{X\!}{\!
	\left( \com{\pid c.pwd}{\pid a};
	\cond{\eqcom{\pid a}{\pid s}}{
	\left(\sel{\pid a}{\pid c,\pid s}{ok};
	\com{\pid s.t}{\pid c}\right)
	}{
	\left(\sel{\pid a}{\pid c,\pid s}{ko}; X\right)
	}\right)
}{X}
\]
In this choreography, a client process $\pid c$ sends a password to an authentication process $\pid a$, which
checks if the password matches that contained in the server-side process $\pid s$. If the 
password is correct, $\pid a$ notifies $\pid c$ and $\pid s$, and $\pid s$ sends an authentication token $t$ 
to $\pid c$. Otherwise, $\pid a$ notifies $\pid c$ and $\pid s$ that authentication failed, and a new attempt
is made (by recursively invoking $X$).\qed
\end{example}

\smallpar{Stateful Processes}
The calulus SP models concurrent/distributed implementations. Thus, unlike in 
CC, actions are now distributed among processes.

\begin{figure}[t]
\begin{align*}
B ::={} & \asend{\pid q}{e};B \ \mid \ \arecv{\pid p};B \ \mid \ \asel{\pid q}{l};B \ \mid \abranch{\pid p}{\{ l_i : 
B_i\}_{i\in I}} \ \mid \ &
N ::={} & \actor{\pid p}{}{B} \ \mid\  \emptyN\ \mid\ N \parp N \\
\mid{} & \lefteqn{\nil
 \mid \cond{\eqcom{\pcont}{\pid q}}{B_1}{B_2} \ \mid \ \rec{X}{B_2}{B_1} \ \mid \ \call X}
\end{align*}
\caption{Stateful Processes, Syntax.}
\label{fig:sp_syntax}
\end{figure}
The syntax of SP is given in Figure~\ref{fig:sp_syntax}.
Networks $N$ are parallel compositions of processes $\actor{\pid p}{}{B}$, read ``process $\pid p$ has 
behavior $B$''.
An output term $\asend{\pid q}{e};B$ sends the result of evaluating $e$ to 
$\pid q$, and then proceeds as $B$. Outputs are meant to synchronize with input terms at the target 
process, i.e., $\arecv{\pid p};B$, which receives a value from $\pid p$ to be stored locally and then proceeds as $B$.
Term $\asel{\pid q}{l};B$ sends the selection of the branch labeled $l$ to $\pid q$. Branches are offered by the 
receiver with term $\abranch{\pid p}{\{ l_i : 
B_i\}_{i\in I}}$, which offers a choice among the labels $l_i$ to $\pid p$. When one of these labels is selected, the 
respective behavior $B_i$ is run.
Term $\cond{\eqcom{\pcont}{\pid q}}{B_1}{B_2}$ communicates with process $\pid q$ to check whether it stores the same value as 
the process running this behavior, in order to choose between the continuations $B_1$ and $B_2$. Terms 
$\rec{X}{B_2}{B_1}$ and $\call X$ are procedure definition and call, respectively.

\begin{figure}[t]
\begin{eqnarray*}
&\infer[\rname{S}{Com}]
{
	\actor{\pid p}{}{\asend{\pid q}{e};B_1}
	\ \parp\ 
	\actor{\pid q}{}{\arecv{\pid p};B_2},\ \sigma
	\ \lto{\com{\pid p.v}{\pid q}} \ 
	\actor{\pid p}{}{B_1}
	\ \parp\ 
	\actor{\pid q}{}{B_2},\ \sigma[\pid q \mapsto v]
}
{
	e[\sigma(\pid p)/\pcont] \eval v
}
\\[1ex]
&\infer[\rname{S}{Sel}]
{
	\actor{\pid p}{}{\asel{\pid q}{l_j};B}
	\ \parp\ 
	\actor{\pid q}{}{\abranch{\pid p}{\{ l_i : B_i\}_{i\in I}}},\ \sigma
	\ \lto{\gensel} \
	\actor{\pid p}{}{B}
	\ \parp\ 
	\actor{\pid q}{}{B_j},\ \sigma
}
{j \in I}
\\[1ex]
&\infer[\rname{S}{Then}]
{
	\actor{\pid p}{}{\cond{\eqcom{\pcont}{\pid q}}{B_1}{B_2}}
	\ \parp\ 
	\actor{\pid q}{}{\asend{\pid p}{e};B'},\ \sigma
	\ \lto{\condlbl{\pid p}{\pid q}{\m{then}}} \ 
	\actor{\pid p}{}{B_1}
	\ \parp \ 
	\actor{\pid q}{}{B'},\ \sigma
}
{
	e[\sigma(\pid q)/\pcont]\eval \sigma(\pid p)
}
\end{eqnarray*}
\caption{Stateful Processes, Semantics (selected rules).}
\label{fig:sp_semantics}
\end{figure}
The semantics of SP is given by labeled reductions $N,\sigma \lto\lambda N',\sigma'$, with labels
$\lambda$ as in CC.\footnote{Deviating from~\cite{CM16b}, we model process values using $\sigma$ as for CC, for 
simplicity.}
Figure~\ref{fig:sp_semantics} shows the key rules (see the appendix for the complete set).
Two processes can synchronize when they refer to each other.
In rule $\rname{S}{Com}$, an output at $\pid p$ directed at $\pid q$ synchronizes with the dual input
action at $\pid q$ -- intention to receive from $\pid p$; in the reductum, $\pid q$'s value is updated.
The reduction receives the same label as the equivalent communication term in CC.
The other rules shown are similar.
The omitted rules are standard, and close the semantics under parallel composition, structural precongruence,
and procedure definitions.

\begin{example}\label{ex:auth_sp}
The following network implements the choreography in Example~\ref{ex:auth_cc}.
\begin{align*}
& \actor{\pid c}{}{\rec{X}{
	\asend{\pid a}{pwd};
	\abranch{\pid a}{\left\{ok: \arecv{\pid s},\ ko: X\right\}}
\ }{X}}
\\
\parp \ & \actor{\pid a}{}{\rec{X}{
	\arecv{\pid c};
	\cond{\eqcom{\pcont}{\pid s}}{
		\left( \asel{\pid c}{ok}; \asel{\pid s}{ok} \right)
	}{
		\left( \asel{\pid c}{ko}; \asel{\pid s}{ko}; X \right) 
	}
\ }{X}}
\\
\parp\ & \actor{\pid s}{}{\rec{X}{
	\asend{\pid a}{\pcont};
	\abranch{\pid a}{\left\{ ok: \asend{\pid c}{t},\ ko: X \right\}}
\ }{X}}
\end{align*}
\end{example}

\smallpar{EndPoint Projection (EPP)}
As shown in~\cite{CM16b}, there exists a partial function $\epp\cdot{}:\mbox{CC}\to\mbox{SP}$, called EndPoint
Projection (EPP), that produces correct implementations of choreographies.
EPP produces a parallel composition of processes, one for each process name in the original choreography:
$\epp{C}{}  =
\prod_{\pid p \in \pn(C)} \actor{\pid p}{}{\epp{C}{\pid p}}$.
The rules for 
computing $\epp{C}{}$ 
project the local action performed by the process of interest.
For example,
$\epp{\gencom}{\pid p}=\asend{\pid q}{e}$ and $\epp\gencom{\pid q}=\arecv{\pid p}$.

The network presented in Example~\ref{ex:auth_sp} is exactly the EPP of the choreography in Example~\ref{ex:auth_cc}.
Observe that the projection of the conditional in the original choreography for the processes $\pid c$ and $\pid s$ is 
a branching that supports all the possible choices made by process $\pid a$ in its projected conditional. 
Producing these branching terms is possible only if, whenever there is a conditional at a process ($\pid a$ in our 
example), all other processes receive a label that tells them which branch such a process has chosen. (In case the 
behaviors of the other processes are the same in both cases, producing branching terms is not necessary.)
When this cannot be done for a choreography $C$, the EPP for $C$ is undefined, and we say that $C$ is 
unprojectable. Conversely, $C$ is \emph{projectable} if $\epp{C}{}$ is defined.

In the remainder, we relate choreographies to network implementations via a strong labeled
reduction bisimilarity $\bisim$.
Bisimilarity is defined as usual~\cite{SW01}: it is the union of all bisimulation relations
$\mathcal R$, which in our case relate choreographies to networks.
A relation $\mathcal R$ is one such bisimulation if whenever $C \mathcal R N$ we have that, for all $\sigma$:
i) $C,\sigma \lto\lambda C',\sigma'$ implies $N,\sigma \lto\lambda N',\sigma'$ for some $N'$ with
$C'\mathcal R N'$; ii) $N,\sigma \lto\lambda N',\sigma'$ implies $C,\sigma \lto\lambda C',\sigma'$ for some
$C'$ with $C'\mathcal R N'$.
%
\begin{theorem}[adapted from~\cite{CM16b}]
If $C$ is projectable, then $C \bisim \epp{C}{}$.
\end{theorem}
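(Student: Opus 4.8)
The plan is to exhibit an explicit bisimulation relation $\mathcal R$ containing the pair $(C, \epp{C}{})$ and show it satisfies the two transfer conditions. The natural candidate is $\mathcal R = \{ (C', \epp{C'}{}) \mid C' \text{ is projectable} \}$, closed up to structural precongruence on both sides (since reductions are closed under $\precongr$, $\mathcal R$ should relate $C'$ to any $N \equiv \epp{C'}{}$). To make this work I first need the key invariant that projectability is preserved by reduction: if $C'$ is projectable and $C', \sigma \lto\lambda C'', \sigma'$, then $C''$ is projectable as well, so that the target pair $(C'', \epp{C''}{})$ lies in $\mathcal R$. This should follow by inspecting each semantic rule of CC against the definition of EPP and its merge operator $\merge$; the conditional case is where projectability could in principle be lost, but reducing a conditional only selects one branch, and the branches of a projectable conditional are themselves projectable.

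For the forward direction (condition i), I would proceed by cases on the CC rule used to derive $C', \sigma \lto\lambda C'', \sigma'$. For each rule I must produce a matching SP reduction of $\epp{C'}{}$ with the same label $\lambda$ and the same state update, landing in $\epp{C''}{}$ up to $\precongr$. For $\rname{C}{Com}$, the projection places $\asend{\pid q}{e}$ at $\pid p$ and $\arecv{\pid p}$ at $\pid q$, so $\rname{S}{Com}$ fires with the identical evaluation side-condition and identical label $\com{\pid p.v}{\pid q}$ and update $\sigma[\pid q \mapsto v]$. The cases $\rname{C}{Sel}$, $\rname{C}{Then}$, $\rname{C}{Else}$ match $\rname{S}{Sel}$, $\rname{S}{Then}$, $\rname{S}{Else}$ analogously. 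The subtle point is the conditional: the other processes' projections are branchings merged from both branches, and I must argue that after $\pid p$ chooses, those branchings are resolved consistently so that the whole residual network equals $\epp{C''}{}$ up to the branch-selection reductions and $\precongr$. The $\rname{C}{Eta-Eta}$ precongruence rule is absorbed by closing $\mathcal R$ under $\precongr$, together with the fact that EPP is invariant under $\equiv$.

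For the backward direction (condition ii), I would show that any SP reduction of $\epp{C'}{}$ can be mirrored in CC. The main obstacle here, and the crux of the whole proof, is that SP reductions are more fine-grained: a network can fire a communication between any two processes whose behaviors happen to match, in an order not literally written at the head of $C'$. The resolution is exactly the $\rname{C}{Eta-Eta}$ swap: if $\epp{C'}{}$ performs an action on processes $\pid p, \pid q$ that is not the syntactically first interaction of $C'$, then because the projection exposes a process's action only when it is enabled, that action must be concurrent (disjoint process names) with the intervening interactions, so $C' \precongr C''' $ with the fired interaction brought to the front, and the corresponding CC rule applies. I expect the most delicate bookkeeping to be in the conditional/branching case, where I must check that a branch-selection step in SP corresponds to a genuine $\rname{C}{Then}$/$\rname{C}{Else}$ step in CC rather than a spurious one, using the state condition $\sigma(\pid p) = \sigma(\pid q)$ (resp.\ $\neq$) that both semantics share. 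Once both directions and the projectability-preservation invariant are established, $\mathcal R$ is a bisimulation containing $(C, \epp{C}{})$, which yields $C \bisim \epp{C}{}$.
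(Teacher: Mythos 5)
First, note that the paper does not actually prove this statement: it is imported wholesale from~\cite{CM16b} (hence ``adapted from''), where the correctness of EPP is established via an operational correspondence stated \emph{up to a pruning order} on networks, not via the literal relation you propose. Comparing your plan against that argument exposes a genuine gap.

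The problem is your choice of relation $\mathcal R = \{ (C', \epp{C'}{}) \mid C' \text{ projectable} \}$, even closed under $\precongr$. It is not a bisimulation, and the transfer condition fails exactly at the point you flag as ``subtle.'' Consider rule $\rname{C}{Then}$: the choreography $\cond{\eqcom{\pid p}{\pid q}}{C_1}{C_2}$ reduces to $C_1$, and the network matches it with $\rname{S}{Then}$. But in the resulting network, every process $\pid r$ other than $\pid p$ still runs the \emph{merged} behavior $\epp{C_1}{\pid r} \merge \epp{C_2}{\pid r}$, whereas $\epp{C_1}{}$ assigns $\pid r$ the behavior $\epp{C_1}{\pid r}$. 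When the merge is nontrivial these differ: the merged term is a branching containing extra (dead) branches coming from $C_2$. Concretely, in the authentication example of the paper, after the choreography takes the \m{then} branch, $\epp{C_1}{\pid c}$ is $\abranch{\pid a}{\{ok : \arecv{\pid s}\}}$, while the network still has $\pid c$ running $\abranch{\pid a}{\{ok : \arecv{\pid s},\ ko : X\}}$. No rule of structural precongruence deletes branches, so the reduct network is not $\precongr$-equal to $\epp{C_1}{}$, the pair lands outside $\mathcal R$, and your induction cannot close. Your escape hatch --- ``up to the branch-selection reductions'' --- is not available in a \emph{strong} bisimulation: the residual branches persist across several steps (until the corresponding selections $\sel{\pid p}{\pid r}{l}$ in $C_1$ reach each $\pid r$), so the relation itself must contain these intermediate, non-projection networks.

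The repair, which is precisely what~\cite{CM16b} does, is to enlarge the relation to pairs $(C', N)$ where $N$ agrees with $\epp{C'}{}$ up to pruning: each branching term in $N$ may contain additional branches beyond those of the corresponding term in $\epp{C'}{}$, and these extra branches are never selected (a fact guaranteed by projectability of $C'$, since selectors in $C'$ only ever emit labels present in its own projection). One then proves this larger relation is a bisimulation; your case analysis for communications, selections, and the out-of-order/$\rname{C}{Eta-Eta}$ reasoning in the backward direction carries over essentially unchanged, and your observation that projectability is preserved by reduction remains correct and necessary. Without the pruning generalization, however, the proof as planned breaks at the first nontrivially merged conditional.
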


\section{Extraction from SP}
\label{sec:extraction}

\smallpar{The finite case}
%
We first investigate \emph{finite SP}, the fragment of SP without recursive definitions, which we use to
discuss the intuition behind our extraction.

\begin{definition}
  \label{defn:extraction}
  We define a rewriting relation $\rwto$ on the language of CC extended with terms $\extract{N}$,
  where $N$ is a network in finite SP, as the transitive closure of:
  \begin{eqnarray*}
    &
    \infer{\extract N\rwto\nil}{N\ \equiv\ \nil}
    \qquad
    \infer{\extract N\rwto\gencom;\extract{N_p\parp N_q\parp N'}}
          {N\ \equiv\ \actor{\pid p}{}{\bsend{\pid q}{e}};N_p\parp\actor{\pid q}{}{\brecv{\pid p}};N_q\parp N'}
    \\
    &
    \infer{\extract N\rwto\sel{\pid p}{\pid q}{l_k};\extract{N_p\parp N_{q_k}\parp N'}}
          {N\ \equiv\ \actor{\pid p}{}{\bsel{\pid q}{l_k}};
          N_p\parp\actor{\pid q}{}{\bbranch{\pid p}{l_1:N_{q_1},\ldots,l_n:N_{q_n}}}\parp N'}
    \\
    &
    \infer{\extract N\rwto\cond{\eqcom{\pid p}{\pid q}}{\extract{N_{p_1}\parp N_q\parp N'}}{\extract{N_{p_2}\parp N_q\parp N'}}}
          {N\ \equiv\ \actor{\pid p}{}{\cond{\eqcom{\pcont}{\pid q}}{N_{p_1}}{N_{p_2}}}
          \parp\actor{\pid q}{}{\bsend{\pid p}{e}};N_q\parp N'}
    \qquad
    \infer{\extract N\rwto\dlock}{\mbox{no other rule applies}}
  \end{eqnarray*}

  A network $N$ in finite SP \emph{extracts} to a choreography $C$ if $\extract N\rwto C$.
\end{definition}


The last rule guarantees that every network is extractable.
Extraction uses structural precongruence (namely, commutativity and associativity of parallel
composition) to find matching actions.
For finite SP, this is not a problem (the set of networks equivalent to a given one is finite),
but it
makes extraction nondeterministic, e.g.,
the network
  $\actor{\pid p}{}{\bsend{\pid q}{e}}
  \parp\actor{\pid q}{}{\brecv{\pid p}}
  \parp\actor{\pid r}{}{\bsend{\pid s}{e'}}
  \parp\actor{\pid s}{}{\brecv{\pid r}}$
extracts both to $\com{\pid p.e}{\pid q};\com{\pid r.e'}{\pid s}$ and $\com{\pid r.e'}{\pid s};\com{\pid p.e}{\pid q}$.
These choreographies are equivalent by Rule~\rname{C}{Eta-Eta} (Figure~\ref{fig:cc_semantics}).
This holds in general, as stated below.

\begin{lemma}
  \label{lem:sound-fin}
  If $\extract N \rwto C_1$ and $\extract N \rwto C_2$, then $C_1 \equiv C_2$.
\end{lemma}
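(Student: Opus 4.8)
The plan is to prove a confluence-style result for the rewriting relation $\rwto$, showing that any two choreographies extractable from the same network are provably equivalent under $\equiv$. Since $\rwto$ is defined as a transitive closure and is nondeterministic only because structural precongruence lets us pick different matching pairs of actions, the natural strategy is to analyze the \emph{first} rewriting step taken in each of the two derivations $\extract N \rwto C_1$ and $\extract N \rwto C_2$, and argue that the two resulting subproblems can be reconciled. I would set up an induction on a suitable measure of $N$ --- the total size of the behaviors in $N$ works, since every rule (except the $\dlock$ rule, which produces a closed term with no further nondeterminism) strictly decreases it by consuming at least one communication, selection, or conditional prefix.

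\smallpar{Case analysis on the first steps} First I would observe that the matching condition $N \equiv N_0$ in each rule is about which actions are \emph{enabled} at the top level, so the genuine source of nondeterminism is the choice of \emph{which} enabled interaction to extract first. I would split into cases according to whether the two first steps extract the same interaction or two different ones. If they extract the same interaction (the same communication, selection branch, or conditional), then the continuation networks are equal up to $\equiv$, and the inductive hypothesis applied to the (smaller) continuation network gives equivalence of the tails, which lifts to $C_1 \equiv C_2$ by congruence of $\equiv$ under prefixing and under the conditional/selection constructors. If the two first steps extract \emph{different} interactions $\eta$ and $\eta'$, the key point is that both were simultaneously enabled in $N$, which forces their process names to be disjoint, $\pn(\eta) \cap \pn(\eta') = \emptyset$: the participants of each interaction appear as top-level prefixes in distinct parallel components, and an action enabled at a process blocks that process, so no process can be shared.

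\smallpar{The diamond} The heart of the argument is then a local-confluence (diamond) step for the different-interactions case. Because $\pn(\eta)\cap\pn(\eta')=\emptyset$, the interaction $\eta'$ is still enabled after extracting $\eta$, and vice versa, so from $N$ one can reach a common continuation network $N''$ (where both $\eta$ and $\eta'$ have been consumed) along both orders. Extracting $\eta$ first yields $C_1 = \eta; D_1$ where $D_1$ is some extraction of the residual, and extracting $\eta'$ first yields $C_2 = \eta'; D_2$ similarly. I would apply the induction hypothesis twice: once to conclude that any extraction of the $\eta$-residual that proceeds by next extracting $\eta'$ is $\equiv$ to $D_1$, and symmetrically for $D_2$; stitching these together produces $C_1 \equiv \eta; \eta'; E$ and $C_2 \equiv \eta'; \eta; E$ for a common extraction $E$ of $N''$. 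The rule $\rname{C}{Eta-Eta}$ of Figure~\ref{fig:cc_semantics}, applicable precisely because the process names are disjoint, then gives $\eta;\eta' \equiv \eta';\eta$, and closure of $\equiv$ under the leading context yields $C_1 \equiv C_2$.

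\smallpar{Main obstacle} I expect the delicate part to be the diamond step when the differing interactions are \emph{conditionals or selections} rather than plain communications, because these branch the continuation: extracting a conditional at $\pid p$ produces two residual networks, so the ``common continuation'' $N''$ is really a pair (or tree) of residuals, and one must check that the branching structure produced by extracting in either order matches up branch-by-branch before invoking $\rname{C}{Eta-Eta}$ and the congruence rules for $\m{if}$ and selection. Handling the interaction between two conditionals at two disjoint pairs of processes --- where each order nests the branching differently --- will require that $\equiv$ (or the structural precongruence extended with the expected swap rules for conditionals, which the paper alludes to via ``non-interfering actions to be executed in any order'') permits commuting these independent branchings; I would lean on exactly the congruence and swap rules guaranteed by $\precongr$, and the finiteness of the equivalence class of $N$ noted just before the lemma to ensure all these manipulations stay within a finite, well-founded search space.
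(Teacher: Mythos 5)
Your proposal is correct and takes essentially the same route as the paper's own (much terser) proof: the paper argues that $\rwto$ has the diamond property, that all diamonds correspond exactly to structural precongruence rules of CC (such as $\rname{C}{Eta-Eta}$), and concludes by induction on the number of rewriting steps, which is precisely the local-confluence-modulo-$\equiv$ argument you spell out, including the observation that simultaneously enabled distinct interactions involve disjoint processes. Your identification of the branching caused by conditionals as the delicate case, and your reliance on the additional swap rules of $\precongr$ beyond $\rname{C}{Eta-Eta}$, is exactly what the paper compresses into ``all the possible diamonds correspond exactly to rules in the definition of the structural precongruence relation.''
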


There is one important design option to consider:
what to do
with actions that cannot be matched, i.e., processes that will deadlock.
There are two alternatives:
restrict extraction to lock-free networks (networks where all processes eventually progress, in the sense
of~\cite{CDM14});
or extract stuck processes to a new choreography term $\dlock$, with the same semantics as $\nil$.
We choose the latter option for debugging reasons.
Specifically, practical applications of extraction may annotate $\dlock$ with the code of the deadlocked
processes, giving the programmer a chance to see exactly where the system is unsafe, and attempt at fixing it
manually.
Better yet: since the code to unlock deadlocked processes in process calculi can be efficiently
synthesized~\cite{CDM14}, our method may be integrated with the technique in~\cite{CDM14} to suggest an
automatic system repair.

\begin{remark}
If $\extract N \rwto C$ and $C$ does not contain $\dlock$, then $N$ is lock-free. However, even if $C$ 
contains $\dlock$, $N$ may still be lock-free: the code causing the deadlock 
may be dead code in a conditional branch that is never chosen during execution.
\end{remark}

Extraction is sound: it yields a choreography that is bisimilar to the original network. Also, for finite 
SP, it behaves as an inverse of EPP.
\begin{theorem}
  \label{thm:correct}
  Let $N$ be in finite SP.
  If $\extract N\rwto C$, then $C\bisim N$.
  Furthermore, if $N = \epp{C'}{}$ for some $C'$, then $\extract N \rwto C'$.
\end{theorem}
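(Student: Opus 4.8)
The plan is to prove the two claims separately: soundness ($\extract{N}\rwto C$ implies $C\bisim N$) by exhibiting a bisimulation built directly from the extraction relation, and the inverse-EPP property ($N=\epp{C'}{}$ implies $\extract{N}\rwto C'$) by structural induction on $C'$. For soundness I would take the candidate relation $\mathcal R=\{(C,N)\mid \extract{N}\rwto C\}$, closed under $\equiv$ on the choreography side (legitimate since reductions are closed under $\precongr$), and verify that it is a bisimulation. Because $N$ is in finite SP and every productive rule invokes extraction on strictly smaller networks, I can verify the two simulation clauses by induction on the size of $N$, with a case analysis on the head rewriting rule that produced $C$. Two observations drive the matching of transitions. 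First, the side conditions of the four productive rules are exactly the premises of $\rname{S}{Com}$, $\rname{S}{Sel}$, $\rname{S}{Then}$/$\rname{S}{Else}$; hence when ``no other rule applies'' and $C=\dlock$, the network is genuinely stuck, so $\dlock$ (with the semantics of $\nil$) and $N$ are trivially bisimilar, as is the $N\equiv\nil$ base case. Second, the head action of each SP process fixes its unique synchronisation partner (a send only with the dual receive, a conditional only with the partner's send, and so on), so any two enabled reductions act on disjoint process sets; thus, whenever the extracted head involves $\{\pid p,\pid q\}$, every other enabled reduction lies entirely inside the residual $N'$ on processes disjoint from $\{\pid p,\pid q\}$.

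This disjointness is what lets the two transition systems simulate each other. For a head communication, $C=\gencom;C''$ matches $\rname{S}{Com}$ on its head, and by the induction hypothesis on $N_p\parp N_q\parp N'$ it matches each $N'$-internal reduction by commuting that reduction in front of $\gencom$ via $\rname{C}{Eta-Eta}$; the follow-up states are again related, using Lemma~\ref{lem:sound-fin} to close under $\equiv$. The conditional case is analogous but relies on the structural precongruence rules (beyond the displayed $\rname{C}{Eta-Eta}$) that float an independent action through both branches of a conditional: since rule $4$ copies $N'$ into both branches, a common independent $N'$-action can be factored out and fired while the conditional stays pending on both sides. The guards of $\rname{C}{Then}$/$\rname{C}{Else}$ and $\rname{S}{Then}$/$\rname{S}{Else}$ agree because $\pid q$ sends exactly its own value $\pcont$, so the comparison $\sigma(\pid p)=\sigma(\pid q)$ coincides on both sides. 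I expect the careful bookkeeping of these commuting (concurrent) transitions, rather than the head steps, to be the main obstacle in this direction.

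For the inverse-EPP property I would proceed by structural induction on $C'$, reading off the first extraction step from the shape of $\epp{C'}{}$. The cases $C'=\nil$, $C'=\gencom;C_1$, and $C'=\gensel;C_1$ are immediate: EPP places the matching send/receive (respectively selection/branching) at the heads of the relevant processes and leaves every other process's projection as the projection of the continuation, so the corresponding extraction rule fires and exposes $\epp{C_1}{}$ as the residual network, to which the induction hypothesis applies.

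The delicate case is the conditional $C'=\gencond$, handled by rule $4$, and it is here that I expect the main obstacle of the whole theorem. EPP's merge operator $\merge$ intervenes: the deciding process $\pid p$ projects to an SP conditional, $\pid q$ first sends its value, but every other process $\pid r$ projects to $\epp{C_1}{\pid r}\merge\epp{C_2}{\pid r}$. Hence the branch continuations $N_{p_1}\parp N_q\parp N'$ and $N_{p_2}\parp N_q\parp N'$ are not literally $\epp{C_1}{}$ and $\epp{C_2}{}$: the non-deciding processes carry merged branchings containing extra labels imported from the other branch, so the plain induction hypothesis does not apply. I would discharge this by strengthening the induction hypothesis to networks that coincide with a projection $\epp{C_i}{}$ up to additional branches that are only selectable along the other branch. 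As extraction of $C_i$'s continuation emits exactly $C_i$'s selections, each such selection (rule $3$) prunes the corresponding merge to its $C_i$ component, so the residual behaves as $\epp{C_i}{}$ and the strengthened hypothesis yields $C_i$. Assembling the branches gives $\extract{N}\rwto\gencond=C'$, with uniqueness up to $\equiv$ provided by Lemma~\ref{lem:sound-fin}.
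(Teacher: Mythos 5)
Your proposal is correct and follows essentially the same route as the paper, whose entire proof reads ``Straightforward by structural induction on $C$'': your bisimulation-plus-size-induction argument for soundness and your structural induction on $C'$ for the inverse-EPP claim are just the two halves of that induction spelled out. The two subtleties you isolate---the need for precongruence rules beyond \rname{C}{Eta-Eta} to commute independent actions past a pending conditional, and the strengthened induction hypothesis absorbing the extra branches that the merge operator $\merge$ puts into the projections of non-deciding processes---are precisely the details the paper's one-line proof leaves implicit.
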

As we show later, the second part of this theorem does not hold in the presence of recursive definitions.

%
We now restate extraction in terms of a particular graph, which is the hallmark
of our development: when we add recursion to SP, we can no longer define extraction as a set of rewriting
rules.
We first introduce a new abstract semantics for networks, $N \lto\alpha N'$, defined as in
Figure~\ref{fig:sp_semantics} except for the rules for value communication and conditionals, which are
replaced by those in Figure~\ref{fig:asp_semantics} (we omit the obvious rule \rname{S}{Else}).
In particular, conditionals are now nondeterministic.
Labels $\alpha$ are like $\lambda$ but may now contain expressions (see the new rule \rname{S}{Com}); in all
other rules, $\lambda$ is replaced by $\alpha$.
We write $N\lmto{\til\alpha} N'$ for $N \lto{\alpha_1} \cdots \lto{\alpha_n} N'$.
\begin{figure}[t]
\begin{eqnarray*}
&
\infer[\rname{S}{Com}]
{
	\actor{\pid p}{}{\asend{\pid q}{e};B_1}
	\ \parp\ 
	\actor{\pid q}{}{\arecv{\pid p};B_2}
	\ \lto{\com{\pid p.e}{\pid q}} \ 
	\actor{\pid p}{}{B_1}
	\ \parp\ 
	\actor{\pid q}{}{B_2}
}{}
\\[1ex]
&
\infer[\rname{S}{Then}]{
	\actor{\pid p}{}{\cond{\eqcom{\pcont}{\pid q}}{B_1}{B_2}}
	\ \parp\ 
	\actor{\pid q}{}{\asend{\pid p}{e};B'}
	\ \lto{\condlbl{\pid p}{\pid q}{\m{then}}} \ 
	\actor{\pid p}{}{B_1}
	\ \parp \ 
	\actor{\pid q}{}{B'}
}{}
\end{eqnarray*}
\caption{Stateful Processes, Abstract Semantics (selected rules).}
\label{fig:asp_semantics}
\end{figure}
\begin{definition}
  \label{defn:extr-graph}
  Given a network $N$, the \emph{Abstract Execution Space (AES)} of $N$ is the directed graph
  obtained by considering all possible abstract reduction paths from $N$.
  Its vertices are all the networks $N'$ such that $N\lmto{\til\alpha} N'$, and there is an edge between two 
  vertices $N_1$ and $N_2$
  labeled $\alpha$ if $N_1\lto\alpha N_2$.
  
  A \emph{Symbolic Execution Graph (SEG)} for $N$ is a subgraph of its AES that contains $N$ and such 
  that each vertex $N'\not\preceq\nil$ has either one outgoing edge labeled by an
  $\eta$ or two outgoing edges labeled $\eqcom{\pid p}{\pid q}:\m{then}$ and $\eqcom{\pid p}{\pid q}:\m{else}$.
\end{definition}

Intuitively, the AES of $N$ represents all possible evolutions of $N$ (each evolution is a path in this graph).
A SEG fixes the order of execution of actions, but still abstracts from the state (and
thus considers both branches of conditionals).
For networks in finite SP, these graphs are finite.

Given a network $N$, there is a one-to-one correspondence between SEGs for $N$ and choreographies $C$ such
that $\extract N\rwto C$.
Indeed, given a SEG we can extract a choreography as follows.
We start from the initial vertex, labeled $N$.
If there is an outgoing edge with label $\eta$ to $N'$, we add $\eta$ to the choreography and continue from
$N'$.
If there are two outgoing edges with labels $\eqcom{\pid p}{\pid q}:\m{then}$ and
$\eqcom{\pid p}{\pid q}:\m{else}$ to $N_1$ and $N_2$, respectively, we extract a conditional whose
branches are the choreographies extracted by continuing exploration from $N_1$ and $N_2$, respectively.
When we reach a leaf, we extract $\nil$ or $\dlock$, according to whether its label is equivalent to $\nil$ or
not.
Conversely, we can build a SEG from a particular rewriting of $\extract N$ by following the choreography
actions one at a time.

\smallpar{Treating recursive definitions}
%
We now extend extraction to networks with recursive definitions, using SEGs.
We need to be careful with the definition of the AES, since including all possible (abstract) executions now
may make it infinite (due to recursion unfolding), and thus extraction may not terminate.
To avoid this, we only allow recursive definitions to be unfolded (once)
if they occur at the head of a process involved in a reduction.
With this restriction, we can define the AES and SEGs for a network as in the finite case.
These graphs may now contain cycles: a network may evolve into the same term after
a few reductions.

\begin{example}
  \label{ex:aes}
  Consider the following network.
  \begin{align*}
    & \actor{\pid p}{}{{}\rec{X}{\asend{\pid q}{\pcont};\bbranch{\pid q}{\lleft:\asend{\pid 
q}{\pcont};X,\lright:\nil}}{\asend{\pid q}\pcont;X}} \\
  \parp &
  \actor{\pid q}{}{{}\rec{Y}{\arecv{\pid p};\arecv{\pid p};\cond{\eqcom{\pcont}{\pid r}}{\bsel{\pid 
p}{\lleft};Y}{\bsel{\pid p}{\lright};\nil}}{Y}} \parp
  \actor{\pid r}{}{{}\rec{Z}{\asend{\pid q}{\pcont};Z}{Z}}
  \end{align*}

  This network generates the AES in Figure~\ref{fig:aes}, which 
  is also its SEG.\qed
\end{example}
\begin{figure}[t]
  \[\xymatrix@R=1em{
    & \makebox[0cm][c]{$\actor{\pid p}{}{\asend{\pid q}{\pcont};X} \parp \actor{\pid q}{}{Y} \parp \actor{\pid r}{}{Z}$}
    \ar[d]^{\com{\pid p.\pcont}{\pid q}}
    \\
    & \makebox[0cm][c]{$\actor{\pid p}{}{X} \parp \actor{\pid q}{}{\arecv{\pid p};\cond{\eqcom{\pcont}{\pid 
r}}{\bsel{\pid p}{\lleft};Y}{\bsel{\pid p}{\lright};\nil}} \parp \actor{\pid r}{}{Z}$}
    \ar[d]^{\com{\pid p.\pcont}{\pid q}}
    \\
    & {\begin{array}c \displaystyle\actor{\pid p}{}{\bbranch{\pid q}{\lleft:\asend{\pid q}{\pcont};X,\lright:\nil}} \parp {} \\
        \displaystyle \actor{\pid q}{}{\cond{\eqcom{\pcont}{\pid r}}{\bsel{\pid p}{\lleft};Y}{\bsel{\pid p}{\lright};\nil}} \parp \actor{\pid r}{}{Z}
        \end{array}}
    \ar '[]+L `l[ddl]_{\eqcom{\pid q}{\pid r}.\m{then}} [ddl]
    \ar[d]^{\eqcom{\pid q}{\pid r}.\m{else}}
    \\
    & \makebox[8em][c]{$\actor{\pid p}{}{\bbranch{\pid q}{\lleft:\asend{\pid q}{\pcont};X,\lright:\nil}} \parp 
\actor{\pid q}{}{\bsel{\pid p}{\lright};\nil} \parp \actor{\pid r}{}{Z}$}
    \ar[d]^{\sel{\pid q}{\pid p}{\lright}}
    \\
    \makebox[6em][c]{$\actor{\pid p}{}{\bbranch{\pid q}{\lleft:\asend{\pid q}{\pcont};X,\lright:\nil}} \parp 
\actor{\pid q}{}{\bsel{\pid p}{\lleft};Y} \parp \actor{\pid r}{}{Z}$}
    \ar '[]+UL `u[uuuur]-<6em,0em>^{\sel{\pid q}{\pid p}{\lleft}} [uuuur]-<6em,0em>
    & \makebox[0em][c]{$\actor{\pid p}{}{\nil} \parp \actor{\pid q}{}{\nil} \parp \actor{\pid r}{}{Z}$}
  }\]
  \caption{The AES and SEG for the network in Example~\ref{ex:aes}.}
  \label{fig:aes}
\end{figure}

The key insight is that the definitions of recursive
procedures are extracted from the loops in the SEG, rather than from the recursive definitions in the source
network.
This construction typically yields mutually recursive definitions, motivating a small change to CC that does not add 
expressivity:
we replace the constructor $\genrec$ by top-level procedure definitions, in the style of~\cite{ourPCstuff}.
A choreography now becomes a pair $\langle\mathcal D,C\rangle$, where $\mathcal D=\{X_i=C_i\}$ and all
procedure calls in either $C$ or the $C_i$ are to some $X_i$ defined in $D$.

\begin{definition}\label{def:extr_unsound}
The choreography extracted from a SEG is defined as follows.
We annotate each node that has more than one incoming edge with a unique procedure identifier.
Then, for every node annotated with an identifier, say $X$, we replace each of its incoming edges with an edge to a new 
leaf node that contains a special term $X$ (so now the node annotated with $X$ has no incoming edges).
This eliminates all loops in the SEG, allowing us to reuse the extraction procedure for the non-recursive case to 
extract the desired pair $\langle\mathcal D,C\rangle$.
We get $C$ by extraction starting from the initial network.
Then, for each node that we annotated with an $X$, we extract a choreographic procedure $X$ in $\mathcal D$ that has as 
body the choreography extracted from the graph that starts from that annotated node. Any new leaf node containing a 
special term $X$ is extracted as a procedure call $X$.
\end{definition}

\begin{example}
  Consider the SEG in Figure~\ref{ex:aes}.
  To extract a choreography, we annotate the topmost node with a procedure identifier $X$ and replace the incoming
  edge to that node with an edge to a new leaf $X$.
  We thus extract $X$ to be
  \[
  \com{\pid p.\pcont}{\pid q};\com{\pid p.\pcont}{\pid q};\cond{\eqcom{\pid q}{\pid r}}{\sel{\pid q}{\pid 
p}\lleft;X}{\sel{\pid q}{\pid p}\lright;\dlock}
  \]
  and the extracted choreography itself is simply $X$.
  The body of $X$ is not projectable (the branches for $\pid r$ are not mergeable, cf.~\cite{CM16b}), but it
  faithfully describes the behavior of the original network.\qed
\end{example}

The procedure in Definition~\ref{def:extr_unsound} always terminates, but sometimes
it yields choreographies that starve some processes.
%
As an example, the network
\begin{align}
  \label{this_is_a_label}
  & \actor{\pid p}{}{{}\rec{X}{\asend{\pid q}{\pcont};X}{X}}
  \ \parp\ \actor{\pid q}{}{{}\rec{Y}{\arecv{\pid p};Y}{Y}} \\
  \nonumber
  \parp \ &\actor{\pid r}{}{{}\rec{Z}{\asend{\pid s}{\pcont};Z}{Z}}
  \ \parp\ \actor{\pid s}{}{{}\rec{W}{\arecv{\pid r};W}{W}}
\end{align}


  
  has two SEGs, which extract
  to the choreographies
  $\rec{X}{\com{\pid p.\pcont}{\pid q};X}{X}$ and 
  $\rec{X}{\com{\pid r.\pcont}{\pid s};X}{X}$, 
  none of which captures all the behaviors of $N$.

To avoid this problem, we change the definitions of AES and SEGs slightly.
We annotate all procedure calls in networks with either $\circ$ or $\bullet$.
The node in the AES corresponding to the initial network has all procedure calls annotated with $\circ$.
There is an edge from $N$ to $N'$ with label $\alpha$ if $N\lto\alpha N'$ and the procedure calls in $N'$ are 
annotated as follows.
\begin{itemize}
\item If executing $\alpha$ does not require unfolding procedure calls, then all calls in $N'$ are annotated
  as in $N$.
\item If executing $\alpha$ requires unfolding procedure calls, then we annotate all the calls in $N'$
  introduced by these unfoldings with $\bullet$.
  If $N'$ now has \emph{all} procedure calls annotated with $\bullet$, we change all annotations to $\circ$.
\end{itemize}

We then require loops in a SEG to contain a node where every procedure call is annotated with $\circ$.
This ensures that every procedure call is unfolded at least once before returning to the same node.
This holds even if $\actor{\pid p}{}{X}$ unfolds to a behavior that calls different procedures,
but not $X$: in order to return to the same node, the newly invoked procedures themselves need to be 
unfolded.

\begin{example}
  The annotated AES for the network~\eqref{this_is_a_label} is:
  \[\xymatrix@R=1em{
    &\makebox[2em][c]{$\actor{\pid p}{}{X^\circ}\parp\actor{\pid q}{}{Y^\circ}\parp\actor{\pid 
r}{}{Z^\circ}\parp\actor{\pid s}{}{W^\circ}$}
    \ar@/^/[dl]_(.4){\com{\pid p.\pcont}{\pid q}}
    \ar@/_/[dr]^(.4){\com{\pid r.\pcont}{\pid s}}
    \\
    \makebox[10em][c]{$\actor{\pid p}{}{X^\bullet}\parp\actor{\pid q}{}{Y^\bullet}\parp\actor{\pid 
r}{}{Z^\circ}\parp\actor{\pid s}{}{W^\circ}$}
    \ar@/^/[ur]+DL+<-5.7em,.7em>^{\com{\pid r.\pcont}{\pid s}}
    \ar@(dr,dl)[]+DL_{\com{\pid p.\pcont}{\pid q}}
    &&\makebox[10em][c]{$\actor{\pid p}{}{X^\circ}\parp\actor{\pid q}{}{Y^\circ}\parp\actor{\pid 
r}{}{Z^\bullet}\parp\actor{\pid s}{}{W^\bullet}$}
    \ar@/_/[ul]+DR+<5.7em,.7em>_{\com{\pid p.\pcont}{\pid q}}
    \ar@(dl,dr)[]+DR^{\com{\pid r.\pcont}{\pid s}}
  }\]
  This AES now has the following two SEGs:
  \[\xymatrix@R=1em{
    \actor{\pid p}{}{X^\circ}\parp\actor{\pid q}{}{Y^\circ}\parp\actor{\pid r}{}{Z^\circ}\parp\actor{\pid s}{}{W^\circ}
    \ar@/^/[d]^{\com{\pid p.\pcont}{\pid q}}
    &&
    \actor{\pid p}{}{X^\circ}\parp\actor{\pid q}{}{Y^\circ}\parp\actor{\pid r}{}{Z^\circ}\parp\actor{\pid s}{}{W^\circ}
    \ar@/^/[d]^{\com{\pid r.\pcont}{\pid s}}
    \\
    \actor{\pid p}{}{X^\bullet}\parp\actor{\pid q}{}{Y^\bullet}\parp\actor{\pid r}{}{Z^\circ}\parp\actor{\pid 
s}{}{W^\circ}
    \ar@/^/[u]^{\com{\pid r.\pcont}{\pid s}}
    &&
    \actor{\pid p}{}{X^\circ}\parp\actor{\pid q}{}{Y^\circ}\parp\actor{\pid r}{}{Z^\bullet}\parp\actor{\pid 
s}{}{W^\bullet}
    \ar@/^/[u]^{\com{\pid p.\pcont}{\pid q}}
  }\]
  Observe that the self-loops are discarded because they do not go through a node with all $\circ$ annotations.
  From these SEGs, we can extract two definitions for $X$:
  \[
  \rec{X}{\com{\pid p.\pcont}{\pid q};\com{\pid r.\pcont}{\pid s};X}{X}
  \qquad\mbox{ and }\qquad
  \rec{X}{\com{\pid r.\pcont}{\pid s};\com{\pid p.\pcont}{\pid q};X}{X}
  \]
  
  Both of these definitions correctly capture all behaviors of the network.\qed
\end{example}

A similar situation may occur if there are processes with finite behavior (no procedure calls): the network
\[
\actor{\pid p}{}{\rec{X}{\asend{\pid q}{\pcont};X}{X}}
\ \parp\ 
\actor{\pid q}{}{\rec{Y}{\arecv{\pid p};Y}{Y}}
\ \parp\ 
\actor{\pid r}{}{\asend{\pid s}{\pcont}}
\ \parp\ 
\actor{\pid s}{}{\arecv{\pid r}}
\]
can be extracted to the choreography $X$, with $X=\com{\pid p.\pcont}{\pid q};X$, where $\pid r$ and $\pid s$
never communicate.
Hence, we require that if a node in a SEG has more than one incoming edge (it is a ``loop'' node)
and contains processes with finite behavior, then these processes must be deadlocked 
(being finite, this is trivially verifiable).
This ensures that if finite processes are able to reduce, they cannot be in a loop.

\begin{definition}
  \label{defn:valid-seg}
  A SEG for a network $N$ is valid if all its loops:
  \begin{itemize}
  \item pass through a node where all recursive calls are marked with $\circ$;
  \item start in a node where all processes with finite behavior are deadlocked.
  \end{itemize}
  A network $N$ extracts to a choreography $C$ if $C$ can be constructed (as in
  Definition~\ref{def:extr_unsound}) from a valid SEG for $N$.
\end{definition}

Validity implies, however, that there are some non-deadlocked networks that are not
extractable, such as
\[
\actor{\pid p}{}{\rec{X}{\asend{\pid q}{\pcont};X}{X}}
\parp
\actor{\pid q}{}{\rec{Y}{\arecv{\pid p};Y}{Y}}
\parp
\actor{\pid r}{}{\rec{Z}{\arecv{\pid p};Z}{Z}}
\]
for which there is no valid SEG.
This is to be expected, since deadlock-freedom is undecidable in SP.
We can generalize this observation as a necessary condition for extraction to be defined, in the following theorem.
\begin{theorem}
  \label{thm:aes-good}
  If the AES for a network $N$ does not contain nodes from which a process is always deadlocked, then $N$ is
  extractable.
\end{theorem}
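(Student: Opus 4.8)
The plan is to prove the statement \emph{constructively}: since a network is extractable exactly when it admits a valid SEG (Definition~\ref{defn:valid-seg}), I build one for $N$ directly from its AES. First I restate the hypothesis operationally. Saying that the AES has no node from which some process is always deadlocked means that for every reachable node $N'$ and every non-terminated process $\actor{\pid p}{}{B}$ of $N'$ (i.e.\ $B\not\precongr\nil$) there is a path $N'\lmto{\til\alpha}N''$ along which $\pid p$ reduces. Two consequences will be used throughout. First, every node not precongruent to $\nil$ has at least one outgoing edge (otherwise some active process would be permanently stuck), so the construction never produces a stuck non-$\nil$ leaf, which is exactly what would extract to $\dlock$; and whenever I elect to fire a ready conditional, the abstract semantics supplies both its \m{then} and \m{else} edges. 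Second, the annotations are \emph{monotone} between resets: a call passes from $\circ$ to $\bullet$ only on unfolding, and the only way a $\bullet$ reverts to $\circ$ is the global reset triggered once all calls are $\bullet$.

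The technical heart is a reachability claim: from every node of the AES one can reach a node in which all procedure calls are marked $\circ$. By the hypothesis every process can be driven to reduce, so any process still carrying a procedure call can be made to consume its current prefix and unfold that call, marking it $\bullet$ (terminated or purely finite processes carry no call and do not obstruct a reset). Because annotations are monotone, I can drive the relevant processes to unfold \emph{one at a time}: once a call is $\bullet$ it stays $\bullet$ until the reset, so after finitely many steps all calls are $\bullet$, the reset fires, and an all-$\circ$ node is reached. The point where the global form of the hypothesis is essential is that liveness is assumed at \emph{every} node, not just at $N$; this is what lets me invoke it again at each intermediate network visited while driving the next process.

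With this in hand I build the SEG as a finite tree that is \emph{folded} only at all-$\circ$ nodes. Starting from $N$ (itself all-$\circ$), at each node I include either the single $\eta$-edge of a chosen reduction or, when a conditional fires, both its \m{then} and \m{else} edges, recursing into each child; the choices follow two priorities. First, I drive every finite-behaviour process to completion: each has bounded behaviour, can always eventually reduce, and strictly shrinks on each of its own reductions, so after finitely many steps on every branch all such processes reach $\nil$. Only beyond this prefix do I permit folding, steering each branch (by the reachability claim) toward an all-$\circ$ node and closing the branch precisely when it revisits an all-$\circ$ node already present on it, redirecting the incoming edge to the earlier occurrence. That redirected edge is a genuine AES reduction, so the result is a subgraph of the AES as demanded by Definition~\ref{defn:extr-graph}. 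Termination follows because there are finitely many all-$\circ$ nodes and each branch meets them repeatedly, so no branch descends forever without reaching $\nil$ or repeating an all-$\circ$ node; the folded tree is thus finite.

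Finally I verify validity against Definition~\ref{defn:valid-seg}. Every loop passes through an all-$\circ$ node because loops are closed only at all-$\circ$ fold nodes, which therefore lie on the loop; and every loop starts at a node where all finite-behaviour processes are unable to reduce, because every fold node lies beyond the initial prefix, by which point those processes are all at $\nil$. The reuse of the non-recursive extraction (Definition~\ref{def:extr_unsound}) then yields a $\langle\mathcal D,C\rangle$ witnessing extractability. I expect the delicate point to be the interplay of the two scheduling priorities: one must ensure that driving finite processes to termination and driving recursive processes to unfold never conflict in a way that forces a loop to close at a non-all-$\circ$ node or with a finite process still active. The monotonicity of the annotations (so marking progress is never undone except by the very reset we seek) and the fact that the liveness hypothesis holds at \emph{every} node are exactly what make the two demands simultaneously satisfiable, while finiteness of the AES turns the fair schedule into a genuine finite SEG.
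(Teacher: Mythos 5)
Your route is genuinely different from the paper's: you attempt a direct, fair-scheduling construction of a valid SEG, whereas the paper runs a backtracking search over the AES and closes with a contradiction argument (if every alternative continuation fails, some process is deadlocked in all states reachable from the loop's start, contradicting the hypothesis). Your preliminary observations — the liveness reading of the hypothesis, monotonicity of annotations between resets, and the reachability claim that every node can reach an all-$\circ$ node — are sound and do real work. The gap is in the assembly step. By Definition~\ref{defn:extr-graph} a SEG is a \emph{subgraph of the AES}, so each network appears at most once and carries one fixed set of outgoing edges (a single $\eta$-edge, or the two branches of one conditional). Your construction, however, is a tree with back-edges to ancestors, driven by a history-dependent scheduler: two occurrences of the same network on different branches (or on the same branch, at a node that is not all-$\circ$ and hence never folded) may be assigned different successors, because the choice depends on which process you are currently driving. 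You assert that redirecting fold edges makes ``the result a subgraph of the AES,'' but collapsing the duplicated occurrences either violates the out-degree condition of Definition~\ref{defn:extr-graph} (same node, two different chosen continuations), or — if you instead make the strategy a function of the node alone — creates cycles that your construction never deliberately closed.

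That second horn is where the argument actually fails. Conditionals force \emph{both} outgoing edges into the SEG, but your scheduler steers only one continuation at a time; the non-steered branch can re-enter the region already built and close a loop at a node that is not all-$\circ$. Concretely, take a conditional node $M_1$ whose else-edge leads to $M_2$ while every path from $M_2$ towards an all-$\circ$ node passes back through $M_1$: the resulting subgraph necessarily contains the loop $M_1 \to M_2 \to \cdots \to M_1$ with no all-$\circ$ node on it, and analogously a merge node of in-degree $\geq 2$ need not have its finite processes deadlocked. Escaping such situations requires revising choices made earlier on \emph{other} branches — exactly the backtracking in the paper's proof — and the case in which no revision helps is exactly where the paper invokes the hypothesis to derive a contradiction. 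Your fairness and monotonicity observations cannot substitute for that step, since they constrain only the branch currently being scheduled, not the interaction between branches that a subgraph of the AES must make consistent.
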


Lemma~\ref{lem:sound-fin} and the first part of Theorem~\ref{thm:correct} still hold for extraction in SP with
recursion, but the second part of Theorem~\ref{thm:correct} does not: in general, the projection of a
choreography is extracted to a choreography with different procedures, since extraction ignores the actual
definitions in the source network.
%
\begin{theorem}
  \label{thm:oc-ac-ap}
  If $C$ is a choreography extracted from a network $N$, then $N\bisim C$.
\end{theorem}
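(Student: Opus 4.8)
The plan is to exhibit a single bisimulation $\mathcal R$ relating $C$ to $N$, read off from the valid SEG from which $C$ is extracted; the argument parallels that for the finite case (the first part of Theorem~\ref{thm:correct}), the genuinely new ingredient being the treatment of loop nodes and the procedures in $\mathcal D$. Recall that the construction of Definition~\ref{def:extr_unsound} assigns to every node $v$ of the SEG a network $N_v$, and that continuing the construction from $v$ (with $\mathcal D$ fixed and loop-edges already cut into procedure calls) yields a choreography $C_v$: for the initial node $C_v$ is $C$, and for a node annotated $X$ it is precisely the body of $X$ in $\mathcal D$. I therefore let $\mathcal R$ be the closure under $\precongr$ on both components (so that a procedure call is related to its unfolding) of $\{(C_v, N_v) : v \text{ a node of the SEG}\}$. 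Since the initial node $v_0$ carries $N$ and $C = C_{v_0}$, we have $(C,N) \in \mathcal R$, and it remains to check the two bisimulation clauses for every state $\sigma$.

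First I would treat the action the SEG schedules at a node $v$. A non-leaf $v$ has either a single outgoing edge labeled $\eta$, so that $C_v \equiv \eta; C_{v'}$, or the two conditional edges, so that $C_v \equiv \cond{\eqcom{\pid p}{\pid q}}{C_{v_1}}{C_{v_2}}$. In the first case the abstract edge $N_v \lto\eta N_{v'}$ lifts, in state $\sigma$, to a concrete reduction of $N_v$ carrying the same label and state update as the $\rname{C}{Com}$ or $\rname{C}{Sel}$ reduction of $C_v$; for a communication this is because the expression $e$ is copied verbatim into the extracted action, so both sides evaluate $e[\sigma(\pid p)/\pcont]$ to the same value and update $\sigma$ identically, landing in $(C_{v'}, N_{v'}) \in \mathcal R$. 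In the conditional case the $\rname{C}{Then}$/$\rname{C}{Else}$ rule and the matching SP rule pick the same branch in $\sigma$, because the partner process sends its stored value to $\pid p$ and both rules then test $\sigma(\pid p) = \sigma(\pid q)$. Leaves precongruent to $\nil$ are matched with terminated networks (no move on either side), and $\dlock$-leaves with genuinely stuck networks; finally, the validity of the SEG (Definition~\ref{defn:valid-seg}) guarantees that every loop performs at least one action, so the procedures of $\mathcal D$ are guarded and matching a call-leaf $X$ with the network at its loop node reduces, after one silent unfolding on the choreography side, to the cases above.

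The real work is the backward clause in the presence of concurrency: $N_v$ need not fire the scheduled action, since any two non-interfering actions may be executed in either order. To match an enabled but unscheduled reduction $N_v \lto\beta N''$, I would use rule $\rname{C}{Eta-Eta}$, together with associativity and commutativity of $\parp$ (all available through $\precongr$), to commute $\beta$ to the head of $C_v$, reducing to the scheduled case. Staying inside $\mathcal R$ then requires the residual choreography after $\beta$ to be precongruent to one extracted from $N''$, and this is exactly where the structure of the AES enters: the abstract semantics has a diamond property on disjoint actions, so the node reached by first taking $\beta$ agrees up to $\precongr$ with $N''$, and the subgraph rooted there is again a valid SEG for $N''$; by Lemma~\ref{lem:sound-fin} its extraction agrees up to $\equiv$ with the commuted residual, so the two are related. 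I expect this confluence-plus-uniqueness argument, and the attendant bookkeeping that every reachable node carries a valid SEG for its network (so that $\mathcal R$ is defined there), to be the main obstacle; the remaining steps are the routine lifts of abstract edges to concrete reductions already seen in the finite case.
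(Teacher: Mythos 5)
Your setup and forward clause are fine, but the backward clause has a genuine gap, and it is precisely where the paper's proof does its real work. When $N_v\lto\beta N''$ fires an action that the SEG does not schedule at $v$, you propose to ``commute $\beta$ to the head of $C_v$'' via $\rname{C}{Eta-Eta}$. This presupposes that $\beta$ occurs somewhere in (an unfolding of) $C_v$, at finite depth, behind only independent actions. That is exactly the non-starvation property that validity (Definition~\ref{defn:valid-seg}) exists to provide, and you never invoke validity for this purpose: you use it only to get guardedness of the procedures in $\mathcal D$. Guardedness is not enough. In the starvation example~\eqref{this_is_a_label}, the \emph{invalid} SEG extracts to $\rec{X}{\com{\pid p.\pcont}{\pid q};X}{X}$, whose loop is perfectly guarded, yet the enabled action $\com{\pid r.\pcont}{\pid s}$ of the network never occurs in any unfolding, so no amount of swapping brings it to the head; that network and that choreography are not bisimilar. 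Since your argument nowhere distinguishes this situation from the valid one, as written it would ``prove'' a false statement. The paper closes this hole with an explicit fairness argument: the out-of-order action remains enabled and swappable along every execution path, and validity guarantees that the scheduled path eventually performs it, whence it can be swapped to the front.

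A second, related problem is that your relation $\mathcal R$ is not closed under the backward moves. After $N_v\lto\beta N''$ with $\beta$ unscheduled, $N''$ is a node of the AES but in general not of the SEG, and it is not precongruent to any SEG node (structural precongruence cannot advance process behaviors); so the pair you need to land in does not belong to $\{(C_w,N_w)\}$ closed under $\precongr$. Appealing to ``the subgraph rooted there is again a valid SEG for $N''$'' does not repair this, because $\mathcal R$ as you defined it contains no pairs coming from that other SEG. The paper sidesteps the issue by defining the relation differently: $C'\mathrel{\mathcal R}N'$ iff $C\to^\ast C'$ and $N\to^\ast N'$ via the \emph{same sequence of labels}, so the partially advanced interleavings are in the relation by construction, and the proof is an induction on the length of that sequence. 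You would need either to adopt that relation or to enlarge yours to all diamond-completions of SEG nodes; as it stands, the relation you exhibit is not a bisimulation.
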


We conclude this section with some complexity theoretical considerations.

\begin{lemma}
\label{lem:size-aes}
The annotated AES for a network of size $n$ has at most $e^{\frac{2n}{e}}$ vertices.
\end{lemma}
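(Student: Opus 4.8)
The plan is to count vertices process-by-process and then convert a product into the stated exponential via an elementary inequality. Every vertex of the annotated AES is a network $\prod_{\pid p\in\pn(N)}\actor{\pid p}{}{B_{\pid p}}$ over the fixed set of process names $\pn(N)$ (SP creates no new names), so it is completely determined by the tuple of annotated behaviours of its processes. Hence, writing $s_{\pid p}$ for the number of distinct annotated behaviours that $\pid p$ can exhibit along abstract reductions, the number of vertices is at most $\prod_{\pid p}s_{\pid p}$. The whole proof then splits into (i) bounding each $s_{\pid p}$ linearly, so that $\sum_{\pid p}s_{\pid p}\le 2n$, and (ii) turning the product of the $s_{\pid p}$ into $e^{2n/e}$.

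For step (ii) I would use the elementary inequality $\ln x\le x/e$, valid for all $x>0$ (the function $x/e-\ln x$ is convex with unique minimum $0$ at $x=e$). Writing $\prod_{\pid p}s_{\pid p}=\exp\left(\sum_{\pid p}\ln s_{\pid p}\right)$ and applying the inequality termwise gives $\prod_{\pid p}s_{\pid p}\le\exp\left(\tfrac{1}{e}\sum_{\pid p}s_{\pid p}\right)$; combined with $\sum_{\pid p}s_{\pid p}\le 2n$ this yields $e^{2n/e}$. This is exactly the continuous maximum of a product of positive reals with fixed sum $2n$, which is why the constant $2/e$ appears.

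The heart of the argument, and the step I expect to be the main obstacle, is (i): controlling the annotations so that $s_{\pid p}\le 2n_{\pid p}$ rather than exponential in the number of procedure calls of $\pid p$. Ignoring annotations, every behaviour reachable by $\pid p$ is a subterm of $\pid p$'s procedure bodies together with its main behaviour: the only non-structural step is unfolding a head call $X$, which replaces it by the fixed body of $X$, itself a subterm, so no genuinely new terms are ever synthesised. Thus the number of unannotated reachable behaviours of $\pid p$ is at most its number of syntactic nodes $n_{\pid p}$. The delicate point is that one such behaviour can contain many procedure calls, so a priori it could occur with exponentially many $\circ$/$\bullet$ patterns; defeating this blow-up is precisely what distinguishes the honest exponential bound here from a factorial one.

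To kill the blow-up I would prove, by induction on abstract reductions, the invariant that \emph{in every reachable network all procedure calls inside a single process carry the same annotation}. The base case is immediate, since the initial network is annotated entirely with $\circ$. For the inductive step, a non-unfolding reduction leaves annotations untouched and only passes to a sub-behaviour of a process (dropping a prefix or entering a branch), preserving uniformity within each process; an unfolding reduction replaces a head call by a body all of whose calls are set to $\bullet$ simultaneously, which is again uniform; and the global reset to $\circ$ preserves uniformity trivially. Consequently the annotation of each process collapses to a single bit, so $s_{\pid p}\le 2n_{\pid p}$, whence $\sum_{\pid p}s_{\pid p}\le 2\sum_{\pid p}n_{\pid p}\le 2n$, closing the argument. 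The real work is to verify this invariant against the exact annotation-propagation rules, in particular that all calls introduced by a single unfolding receive $\bullet$ at once and that conditionals and branchings discard rather than mix annotation ``epochs''.
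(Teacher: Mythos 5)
Your proof is correct and follows essentially the same route as the paper's: bound the unannotated reachable behaviours of each process of size $n_i$ by $n_i$ (head-only unfolding produces only subterms), note that annotations are uniform within a single process (one extra bit per process, giving a bound of $\prod_i 2n_i$), and then maximize this product subject to $\sum_i n_i = n$. The only cosmetic difference is the final optimization: the paper applies AM--GM to get $(2n/p)^p$ and then maximizes over the number of processes $p$, whereas you apply $\ln x \le x/e$ termwise --- both yield $e^{\frac{2n}{e}}$, and your inductive justification of the annotation-uniformity invariant is simply a more explicit version of the paper's one-line observation.
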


\begin{theorem}
  \label{thm:extract-complex}
  Extraction from a network of size $n$ terminates in time $O(n e^{\frac{2n}{e}})$.
\end{theorem}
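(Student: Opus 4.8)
The plan is to bound each of the three phases of extraction separately and check that they all fit within $O(n\,e^{2n/e})$. Throughout I rely on Lemma~\ref{lem:size-aes}, which bounds the number of vertices of the annotated AES by $e^{2n/e}$ and, in its proof, also establishes that the graph is sparse, so that the number of edges is of the same order as the number of vertices. A preliminary observation I would record first is that every network reachable in the AES has size $O(n)$: since recursive definitions are unfolded only at the head of a reducing process, each unfolding replaces a call $X$ by a body of bounded size, so every process-state stays within the size of its original definition and the whole network within $O(n)$. This is what supplies the extra factor of $n$ in the final bound.

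First I would analyse the construction of the AES. This is a standard graph exploration from $N$: for each newly discovered vertex I compute its successors under the abstract semantics of Figure~\ref{fig:asp_semantics} by a single scan of the $O(n)$-size network, matching dual send/receive heads, enabled conditionals, and selections, and I form the successor networks together with the $\circ/\bullet$ re-annotation prescribed before Definition~\ref{defn:valid-seg}. Because each process exposes at most one head action, a vertex has only $O(n)$ enabled reductions; using structure sharing so that unchanged processes are not copied, producing all successors of one vertex costs $O(n)$, and by keeping vertices in a canonical form (a fixed ordering of the parallel components, which turns $\precongr$-equality into a syntactic test) the membership checks against already-visited vertices also cost $O(n)$ per vertex. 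Hence building the AES costs $O(n)$ per vertex, and by sparsity the edges are absorbed into the same budget, for a total of $O(n\,e^{2n/e})$.

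Next I would treat the extraction of a choreography from the AES, which consists of selecting a valid SEG and then reading off the pair $\langle\mathcal D,C\rangle$ as in Definition~\ref{def:extr_unsound}. Selecting the SEG is a traversal in which each non-terminated vertex retains either one $\eta$-edge or the matching then/else pair, subject to the validity conditions of Definition~\ref{defn:valid-seg}. The read-off phase—annotating multiply-entered nodes with procedure names, cutting the resulting loops, and emitting one choreography term per retained edge—is then a single pass costing $O(n)$ per vertex, the $O(n)$ accounting for the size of the emitted terms. Summed over the at most $e^{2n/e}$ vertices this is again $O(n\,e^{2n/e})$, and adding the three phases preserves the bound.

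The main obstacle is arguing that selecting a valid SEG does not incur a blow-up on top of the AES size. A priori, at each non-deterministic vertex there is a choice of which outgoing edge to keep, so the space of candidate subgraphs is exponential in the number of vertices, and the loop-validity conditions (every loop must pass through an all-$\circ$ node and must start at a node whose finite-behaviour processes are deadlocked) look global. The point I would stress is that both conditions are decided locally from information already present in the vertices: the $\circ/\bullet$ marking is part of each vertex, so whether a loop is valid is a property of the vertices it visits rather than of a separate global search, and the deadlock test on finite processes is performed per vertex in $O(n)$ because those processes are finite. Consequently a single structured traversal of the AES suffices to find a valid SEG, or to report that none exists—which by Theorem~\ref{thm:aes-good} happens precisely when the AES contains a node from which some process is permanently deadlocked—with no backtracking over the exponentially many candidate subgraphs, keeping this phase within the $O(n\,e^{2n/e})$ budget.
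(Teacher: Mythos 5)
Your analysis of the AES construction phase matches the paper's (iterative exploration with a search structure for membership tests, $O(n)$ work per vertex, sparsity from Lemma~\ref{lem:size-aes} absorbing the edges), and your preliminary observation that reachable networks stay of size $O(n)$ is sound. The gap is in the SEG-selection phase, which is precisely where the paper's proof does its real work. You assert that both validity conditions of Definition~\ref{defn:valid-seg} are ``decided locally from information already present in the vertices,'' and that therefore a single traversal with no backtracking suffices. That is not justified: the all-$\circ$ condition is a property of a whole loop, not of any single vertex. When a traversal closes a cycle, deciding whether that cycle passes through an all-$\circ$ node requires inspecting the path segment between the loop entry and the current node; the fact that the $\circ/\bullet$ marks are stored in the vertices does not make this check constant-time. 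Moreover, the choice of which outgoing edge to retain genuinely requires backtracking: in the two-loop network~\eqref{this_is_a_label}, a traversal that keeps a self-loop edge forms an invalid loop and must undo that choice and take the other edge. The paper's proof explicitly flags this as the one complication --- naively re-walking each discovered loop inside the backtracking search of Theorem~\ref{thm:aes-good} ``could potentially move the overall complexity to factorial time'' --- so dismissing it is not an option.

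What your proposal is missing is the device the paper introduces to keep this phase linear: the backtracking search maintains the current path as an explicit stack; each stack entry records a boolean (is its node all-white) and a counter of how many white nodes lie further down the stack, both maintainable in $O(1)$ per push and pop; when a back edge is encountered, comparing the counter at the top of the stack with the counter $c$ at the loop entry's stack item decides in $O(1)$ whether the loop contains an all-white node. Without this data structure, or a proof that your claimed local decidability actually holds (it does not, as stated), your third phase has no valid cost bound, and the proposal does not establish the theorem. A minor additional slip: you cite Theorem~\ref{thm:aes-good} as if it were an equivalence (extraction fails precisely when some node has a permanently deadlocked process), but the theorem only states the sufficient direction.
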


As discussed earlier, this time complexity is a dramatic improvement over
earlier, comparable work. However, in practice, we may be able to
perform even better.
Algorithmically, all the required work stems from traversals of the AES,
so any reduction in its (explored) size will lead to proportional
runtime improvements.
Thus, instead of first computing the entire AES and then a
valid SEG, we can compute the relevant parts of the AES lazily as we need them,
so parts of the AES that are never explored while computing a valid SEG are
never generated.


\section{Asynchrony}
\label{sec:async}
We now discuss an asynchronous semantics for SP, with which we can express new safe behaviors.
Most notably, SP can now express asynchronous exchange (Example~\ref{ex:async-mot}).
We also show a novel choreography primitive that successfully captures this pattern, which cannot be described
in previous works on choreographic programming, and extend our algorithm to extract it from networks.

\smallpar{Asynchronous SP}
Asynchronous communication can be added to SP using standard techniques for process calculi.
In the semantics of networks, we add a FIFO queue for each pair of processes.
Communications now synchronize with these queues: send actions append a message in the queue 
of the receiver, and receive actions remove the first message from the queue of the receiver
(see~\cite{ourPCstuff} for a formalization in an extension of SP).

\begin{example}
  \label{ex:async-mot}
  The network
  $
  \actor{\pid p}{}{\asend{\pid q}{\pcont};\arecv{\pid q}}
  \ \parp\ 
  \actor{\pid q}{}{\asend{\pid p}{\pcont};\arecv{\pid p}}
  $ exemplifies the pattern of asynchronous exchange.
  This network is deadlocked in synchronous SP, but runs without errors in asynchronous SP: both $\pid p$ and
  $\pid q$ can send their respective values, becoming ready to receive each other's messages.
  This behavior is not representable in any previous work on choreographies (including CC from
  \S~\ref{sec:cc}), since all choreographies presented so far can only describe processes that are not
  deadlocked under a synchronous semantics (see~\cite{ourPCstuff} for a formal argument).\qed
\end{example}

\smallpar{The multicom}
The situation in Example~\ref{ex:async-mot} is prototypical of programs that are safe only in an asynchronous 
setting: a group of processes wants to send messages to a group of receivers, with circular
dependencies among communications.

We deal with this situation by means of a new choreography action, which we call a \emph{multicom}.
Syntactically, a multicom is a list of communication actions with distinct receivers, which we write 
$\genmulticometa$. 
In the unary case, we obtain the usual communications and selections; by removing these from the syntax of CC
and adding the multicom, we obtain a more expressive calculus with fewer primitives.
The semantics of multicom is given by the following rule, which generalizes (and replaces) both \rname{C}{Com}
and \rname{C}{Sel}.
\begin{displaymath}
\infer[\rname{C}{MCom}]
{
	\genmulticometa;C,\sigma
	\lto{
		\genmulticometa[e_i/v_i]_{i \in I}
	}
	C, \sigma[\pid q_i \mapsto v_i]_{i \in I}
}
{
	I = \left\{ i \mid \com{\pid p_i.e_i}{\pid q_i} \in \til \eta \right\}
	&
	v_i = e_i[\sigma(\pid p_i)/\pcont]
}
\end{displaymath}

Structural precongruence rules for the multicom are motivated by its intuitive semantics:
actions inside a multicom can be permuted as long as the senders differ, and sequential multicoms can be
merged as long as they do not share receivers and there are no sequential constraints between them (i.e., none
of the receivers in the first multicom is a sender in the second one).
\begin{eqnarray*}
  &\infer[\rname{C}{MCom-Perm}]
  {\multicom{\ldots, \eta_1, \eta_2, \ldots} \equiv \multicom{\ldots, \eta_2, \eta_1, \ldots}}
  {\pn(\eta_1)\cap\pn(\eta_2)=\emptyset}
  \\[1ex]
  &\infer[\rname{C}{MCom-MCom}]
  {\multicom{\til\eta};\multicom{\til\nu} \equiv \multicom{\til\eta,\til\nu}}
  {\rcv(\eta)\cap\rcv(\nu)=\emptyset
    &
    \rcv(\til\eta)\cap\snd(\til\nu)=\emptyset
  }
\end{eqnarray*}

From these rules we can derive all instances of \rname{C}{Eta-Eta}, e.g.:
\[
\com{\pid p.\pcont}{\pid q};\com{\pid r.\pcont}{\pid s}
\equiv
\multicom{\com{\pid p.\pcont}{\pid q}\\\com{\pid r.\pcont}{\pid s}}
\equiv
\multicom{\com{\pid r.\pcont}{\pid s}\\\com{\pid p.\pcont}{\pid q}}
\equiv
\com{\pid r.\pcont}{\pid s};\com{\pid p.\pcont}{\pid q}
\]


The problematic program in Example~\ref{ex:async-mot} can now be written as
$\multicom{\com{\pid p.\pcont}{\pid q}\\ \com{\pid q.\pcont}{\pid p}}$.

Structural precongruence rules for multicom also allow us to define a normal form for choreographies, where no
multicom can be split in smaller multicoms.

\smallpar{Extraction}
In order to extract choreographies containing multicoms, we alter the definition of the AES for a process
network by allowing multicoms as labels for the edges.
These can be computed using the following iterative algorithm.
\begin{enumerate}
\item For a process $\pid p$ with behavior $\asend{\pid q}e;B$ (or $\asel{\pid q}l;B$), set
  $\m{actions}=\emptyset$ and $\m{waiting}=\{\com{\pid p.e}{\pid q}\}$ (resp.\ 
  $\m{waiting}=\{\sel{\pid p}{\pid q}l\}$).
\item While $\m{waiting}\neq\emptyset$:
  \begin{enumerate}
  \item Move an action $\eta$ from $\m{waiting}$ to $\m{actions}$.
    Assume $\eta$ is of the form $\com{\pid r.e}{\pid s}$ (the case for label selection is similar).
  \item If the behavior of $\pid s$ is of the form $a_1;\ldots;a_k;\arecv{\pid r};B$ where each $a_i$ is
    either the sending of a value or a label selection, then: for each $a_i$, if the corresponding
    choreography action is not in $\m{actions}$, add it to $\m{waiting}$.
  \end{enumerate}
\item Return $\m{actions}$.
\end{enumerate}
This algorithm may fail (the behavior of $\pid s$ in step~2(b) is not of the required form), in which case the action
initially chosen cannot be unblocked by a multicom.

\begin{example}
  Consider the network from Example~\ref{ex:async-mot}.
  Starting with action $\asend{\pid q}\pcont$ at process $\pid p$, we initialize
  $\m{actions}=\emptyset$ and $\m{waiting}=\{\com{\pid p.\pcont}{\pid q}\}$.
  We pick the action $\com{\pid p.\pcont}{\pid q}$ from $\m{waiting}$ and move it to $\m{actions}$.
  The behavior of $\pid q$ is $\asend{\pid p}\pcont;\arecv{\pid p}$, which is of the form described in
  step~2(b); the choreography action corresponding to $\asend{\pid p}\pcont$ is $\com{\pid q.\pcont}{\pid p}$,
  so we add this action to $\m{waiting}$, obtaining $\m{actions}=\{\com{\pid p.\pcont}{\pid q}\}$ and
  $\m{waiting}=\{\com{\pid q.\pcont}{\pid p}\}$.
  Now we consider the action $\com{\pid q.\pcont}{\pid p}$, which we move from $\m{waiting}$ to $\m{action}$,
  and look at $\pid p$'s behavior, which is $\asend{\pid q}\pcont;\arecv{\pid q}$.
  The choreography action corresponding to $\asend{\pid q}\pcont$ is $\com{\pid p.\pcont}{\pid q}$, which is
  already in $\m{actions}$, so we do not change $\m{waiting}$.
  The set $\m{waiting}$ is now empty, and the algorithm terminates,
  returning $\multicom{\com{\pid p.\pcont}{\pid q}\\ \com{\pid q.\pcont}{\pid p}}$.
  We would obtain the equivalent
  $\multicom{\com{\pid q.\pcont}{\pid p}\\ \com{\pid p.\pcont}{\pid q}}$ by starting with the send action at
  $\pid q$.\qed
\end{example}

\begin{example}
As a more sophisticated example, we show how our new choreographies with multicom can model the alternating 2-bit 
protocol.
Here, Alice alternates between sending a $0$ and a $1$ to Bob; in turn, Bob sends an acknowledgment for 
every bit he receives, and Alice waits for the acknowledgment before sending another copy of the same bit.
Since we are in an asynchronous semantics, we only consider the time when the messages arrive.
With this in mind, we can write this protocol as the following network.
\begin{align*}
  \actor{\pid a}{}{{}&\rec{X}{(\arecv{\pid b};\asend{\pid b}0;\arecv{\pid b};\asend{\pid b}1;X)}{(\asend{\pid 
b}0;\asend{\pid b}1;X)}} \\
  \parp\ 
  \actor{\pid b}{}{{}&\rec{Y}{(\arecv{\pid a};\asend{\pid a}{\m{ack}_0};\arecv{\pid a};\asend{\pid a}{\m{ack}_1};Y)}{Y}}
\end{align*}

This implementation imposes exactly the dependencies dictated by the protocol.
For example, Alice can receive Bob's acknowledgment to the first $0$ before or after Bob receives the first $1$.
This network extracts to the choreography
\[
{\com{\pid a.0}{\pid b};X}
\qquad\mbox{where}\qquad
X = \multicom{\com{\pid a.1}{\pid b}\\ \com{\pid b.\m{ack}_0}{\pid a}};
  \multicom{\com{\pid a.0}{\pid b}\\ \com{\pid b.\m{ack}_1}{\pid a}}; X
\]
which is a simple and elegant representation of the alternating 2-bit protocol.\qed
\end{example}

Extraction for asynchronous SP is still sound, but behavioral equivalence is now an
expansion~\cite{Hennessy:whatever,SW01}, as each communication now takes two steps in asynchronous SP.
Its complexity is also no larger than for the synchronous case.
%
The algorithm computing the multicom takes
linear time in the size of the multicom produced.
Via a one-time preprocessing of the network, we can assume direct references from
communication terms in one process to the process it directs
its communication at, and from there to the current state of that process.
Other than the above, all constant steps in the algorithm can be
seen as an extension of the multicom.
Since adding a communication to a multicom removes
a potential node in the AES (as we are combining communications), the worst-case
time complexity is no worse than in the synchronous case.
In practice,
this complexity actually gets better when larger multicoms are created,
since building these is a much cheaper local operation
than exploring graphs that would be larger in terms of nodes
as well as edges without the multicoms.
%


\section{Extensions and Applications}
\label{sec:extensions}
We discuss some straightforward modifications of our extraction to cover other scenarios occurring in the literature.

\paragraph{More expressive communications and processes.}
In real-world contexts, the values stored and communicated by processes are typed,
and the receiver process can also specify how to treat incoming messages~\cite{ourPCstuff}.
This means that communication actions now have the form $\gencomf$, where $f$ is the function consuming the received 
message, and systems may deadlock because of typing errors.
Our construction applies without changes to this scenario.

Some works allow processes to store several values, used via variables~\cite{CHY12,CM13}.
Again, dealing with this situation does not require any changes to our algorithm.

\paragraph{Local conditionals.}
Many choreography models allow for a local conditional construct, i.e.,
$\cond{\pid p.e}{C_1}{C_2}$~\cite{CM13,MY13,PGGLM15}.
Dealing with this construct is simple: the \m{if} and \m{then} transitions 
now can occur whenever a process has a conditional as top action, since they no longer require synchronization with 
other processes.

\paragraph{Choreography Specifications.}
So far, we have considered choreographies that describe concrete implementations, i.e., processes are equipped with 
storage and local computational capabilities.
However, choreographies have also been advocated for the specification of communication protocols. Most notably, 
multiparty session types use choreographies to define types used in the verification of process calculi~\cite{HYC16}.
While there are multiple variants of multiparty session types, the one so far most used in practice is almost 
identical to a simplification of SP. In this variant, each pair of participants has a dedicated channel, and 
communication actions refer directly to the intended sender/recipient as in SP (see, e.g., the theory 
of~\cite{CM13,MY13,CLMSW16,CDYP16} and the practical implementations in~\cite{HMBCY11,NY14,M13:phd}).
To obtain multiparty session types from SP (and CC), we just need to: remove the capability of storing values at 
processes; replace message values with constants (representing types, which could also be extended to subtyping in 
the straightforward way); and make conditionals nondeterministic (since in types we abstract from the precise 
values and expression used by the evaluator).
These modifications do not require any significant change to our approach, since our AES already abstracts from 
data and thus our treatment of the conditional is already nondeterministic. For reference, we can simply treat 
the standard construct for an internal choice at a process $\pid p$ -- $C_1\oplus_{\pid p} C_2$ -- as syntactic 
sugar for a local conditional like $\cond{\pid p.\m{coinflip}}{C_1}{C_2}$.

%
%




\clearpage

\bibliography{biblio}


\end{document}